\newtheorem{proposition}{Proposition}
\theoremstyle{definition}
\newtheorem{definition}{Definition}
\newcommand{\de}{\,{\rm d}}
\newcommand{\R}{\mathbb{R}} 
\newcommand{\C}{\mathbb{C}} 
\newcommand{\rational}{\mathbb Q} 
\newcommand{\nat}{\mathbb N} 
\newcommand{\hi}{\mathcal{H}} 
\newcommand{\hh}{\mathcal{H}} 
\newcommand{\ww}{\mathcal{W}} 
\renewcommand{\aa}{\mathcal{A}} 
\newcommand{\lh}{\mathcal{L(H)}} 
\newcommand{\trh}{\mathcal{T(H)}} 
\newcommand{\ip}[2]{\left\langle\,#1\,|\,#2\,\right\rangle} 
\newcommand{\kb}[2]{|#1\rangle\langle#2|} 
\newcommand{\tr}[1]{\mathrm{tr}\left[#1\right]} 
\newcommand{\id}{\mathbbm{1}} 
\newcommand{\Eoo}{\mathsf{E}}
\newcommand{\Po}{\mathsf{P}}
\newcommand{\Qo}{\mathsf{Q}}
\newcommand{\e}{{\rm e}}
\newcommand{\h}[1]{\mathcal{#1}}
\newcommand{\be}{\begin{equation}}
\newcommand{\ee}{\end{equation}}
\begin{document}\setlength{\arraycolsep}{2pt}

\title{Nonuniqueness of phase retrieval for three fractional Fourier transforms}

\author{Claudio Carmeli\thanks{DIME,  Universit\`a di Genova, Via Magliotto 2, I-17100 Savona, Italy. E-mail: {\em claudio.carmeli@gmail.com}} \and Teiko Heinosaari\thanks{Turku Centre for Quantum Physics, Department of Physics and Astronomy,  University of Turku, FI-20014 Turku, Finland. E-mail: {\em teiko.heinosaari@utu.fi}} \and Jussi Schultz\thanks{Dipartimento di Matematica, Politecnico di Milano, Piazza Leonardo da Vinci 32, I-20133 Milano, Italy \& Turku Centre for Quantum Physics, Department of Physics and Astronomy,  University of Turku, FI-20014 Turku, Finland. E-mail: {\em jussi.schultz@gmail.com}} \and Alessandro Toigo\thanks{Dipartimento di Matematica, Politecnico di Milano, Piazza Leonardo da Vinci 32, I-20133 Milano, Italy, \& I.N.F.N., Sezione di Milano, Via Celoria 16, I-20133 Milano, Italy. E-mail: {\em alessandro.toigo@polimi.it}}}

\date{\empty}

\maketitle

\begin{abstract}
We prove that, regardless of the choice of the angles $\theta_1,\theta_2,\theta_3$, three fractional Fourier transforms $F_{\theta_1}$, $F_{\theta_2}$ and $F_{\theta_3}$ do not solve the phase retrieval problem. 
That is, there do not exist three angles  $\theta_1$, $\theta_2$, $\theta_3$ such that any signal  $\psi\in L^2(\R)$ could be determined up to a constant phase by  knowing only the three intensities $|F_{\theta_1}\psi|^2$, $|F_{\theta_2}\psi|^2$ and $|F_{\theta_3}\psi|^2$. This provides a negative argument against a recent speculation by P. Jaming, who stated that three suitably chosen fractional Fourier transforms are good candidates for phase retrieval in infinite dimension. 
We recast the question in the language of quantum mechanics, where our result shows that any fixed triple of rotated  quadrature observables $\Qo_{\theta_1}$, $\Qo_{\theta_2}$ and $\Qo_{\theta_3}$ is not enough to determine all unknown pure quantum states.
The sufficiency of four rotated quadrature observables, or equivalently fractional Fourier transforms, remains an open question.
\end{abstract}

\section{Introduction}

The problem of phase retrieval deals with reconstructing the phase of a complex signal from intensity measurements, that is, measurements which give as an outcome only the modulus of the signal. This problem is encountered in a wide variety of practical circumstances such as microscopy and crystallography \cite{Stark87}. In the context of quantum mechanics, it can be traced back to W.~Pauli, who noted in a footnote in \cite{Pauli33} that the question whether or not the position distribution $\vert \psi\vert^2$ and the momentum distribution $\vert\widehat{\psi}\vert^2$ uniquely determine the wave function $\psi$, ``has still not been investigated in all its generality''. 
It was soon realized that the distributions $\vert \psi\vert^2$ and $\vert\widehat{\psi}\vert^2$ do not determine the wave function up to a phase \cite{PFQM44}, and therefore one is lead to search for a larger class of measurements that would be sufficient for the task at hand (see e.g.~\cite{Vogt78, Corbett06} for some more recent developments).

A natural direction for extending the original Pauli problem is given by the fractional Fourier transforms. The fractional Fourier transforms are a family of unitary operators $F_\theta$, $\theta\in[0,2\pi)$, on $L^2(\R)$ which generalize the usual Fourier transform in such a way that (i) $F_0=I$, (ii) $F_{\pi/2} =F$, the usual Fourier transform, (iii) $F_\pi=\Pi$, the parity operator, (iv) $F_{3\pi/2}=F^{-1}$ and (v) $F_{\theta_1} F_{\theta_2} = F_{\theta_1+\theta_2}$ where addition is understood {\em modulo} $2\pi$ (see \cite{Namias1980,McBride87}). The problem then is to determine if the knowledge of the intensities $\vert F_\theta\psi\vert^2 $ for some suitable set of angles is sufficient for determining an arbitrary signal $\psi\in L^2(\R)$ up to a constant phase. 
In the recent article \cite{Jaming14}, Jaming carried out this line of approach and obtained several interesting results in a number of cases where the signal is known to belong to some restricted class. In particular, in \cite[Theorem 5.5]{Jaming14} he proved that if $\psi$ belongs to the dense subset of finite linear combinations of Hermite functions, then already two suitably chosen fractional Fourier transforms are sufficient. This immediately raises the question if three angles would be enough for an arbitrary signal. Indeed, Jaming himself suggests that ``the fractional Fourier transform is a good candidate'' for providing three unitary operators which would guarantee the uniqueness of  phase retrieval.

It is the purpose of this Letter to show that regardless of the choice of the angles, three fractional Fourier transforms are not enough to ensure the uniqueness in the phase retrieval problem for arbitrary signals. 
We prove this by first formulating the question in the context of quantum mechanics, in which case knowledge of the modulus of the fractional Fourier transform corresponds to a measurement of a rotated quadrature observable (see formula \eqref{eq:quad-->Four} below). 
The problem is then turned into the analysis of the operator systems generated by sets of quadrature observables, rather than directly dealing with the corresponding Fourier operators. 
In this way, its solution is much simpler, as it
essentially boils down to the analysis of symplectic $2\times 2$-matrices. 
We then close this Letter by showing that our method no longer works for four angles, and therefore the exhaustive answer to the question regarding the minimal number of fractional Fourier transforms for unique phase retrieval remains an open question. These results should be compared to similar ones in the finite-dimensional setting, where it is known that uniqueness for the phase retrieval can be achieved with four unitary operators \cite{Jaming14,MoVo13}, and at least for sufficiently high dimensions this is the minimal number \cite{Moroz84,MoPe94,HeMaWo13}.

\section{Quantum mechanical formulation of the problem}
In quantum mechanics, the description of a physical system is based on a complex separable Hilbert space $\hi$. We use the notation $\ip{\cdot}{\cdot}$  for the inner product on $\hi$ which, following the convention of the physics literature, we assume to be linear in the second argument. 

Let $\lh$ and $\trh$  denote the Banach spaces of bounded and trace class operators on $\hi$, respectively. The physical {\em states} of the system are represented by elements $\varrho\in\trh$ satisfying positivity $\varrho \geq 0$ and normalization $\tr{\varrho} =1$. The states form a convex set whose extreme points, called the {\em pure states}, are precisely the one dimensional  projections $\vert \varphi \rangle\langle \varphi\vert:\hi\to\hi$, $\Vert\varphi \Vert=1$, defined via $\vert \varphi \rangle\langle \varphi\vert\psi = \langle\varphi\vert \psi\rangle \varphi$, with $\varphi$ and $\psi$ in $\hi$. 
The {\em  observables} are represented by normalized positive operator valued measures (POVMs) $\Eoo:\h B(\R)\to \lh$ where $\h B(\R) $ denotes the Borel $\sigma$-algebra of $\R$ \cite{OQP97,PSAQT82}. More precisely, an observable is a map $\Eoo:\h B(\R)\to\lh$ which satisfies (i) positivity $\Eoo(X)\geq 0$ for all $X\in \h B(\R)$, (ii) normalization $\Eoo(\R)=I$, and (iii) $\sigma$-additivity $\tr{\varrho \Eoo(\cup_j X_j)} = \sum_j\tr{\varrho \Eoo (X_j)}$ for all states $\varrho\in\trh$ and all sequences $(X_j)_j$ of pairwise disjoint Borel sets. It follows that for any state $\varrho$, the map $\varrho^\Eoo:\h B(\R)\to[0,1]$ defined via $\varrho^\Eoo(X) = \tr{\varrho \Eoo(X)}$ is a probability measure, and the number $\varrho^\Eoo(X)$ is interpreted as the probability that the measurement of $\Eoo$ gives an outcome from the set $X$, when the system is initially prepared in the state $\varrho$. In this article we are mainly interested in projection valued observables, that is, ones which satisfy $\Eoo(X)^2=\Eoo(X)$. By the spectral theorem \cite[X.4.11]{CFA90}, these are in one-to-one correspondence with selfadjoint operators on $\hi$.

The problem of phase retrieval can now be formulated as a problem of determining an unknown pure state from measurement outcome statistics. This is one instance of quantum tomography, a field which focuses on the problem of state reconstruction. In general, when the object to be determined is an arbitrary state, i.e., pure or mixed, then one needs to measure a collection $\h A$ of observables $\Eoo:\h B(\R)\to\lh$ which is {\em informationally complete} \cite{Prugovecki77} in the sense that for any two states $\varrho_1$ and $\varrho_2$, $\varrho^\Eoo_1 = \varrho^\Eoo_2$ for all $\Eoo\in\h A$ implies $\varrho_1=\varrho_2$. 
If one is only interested in determining pure states, then the following weaker notion is relevant (see, e.g., \cite{HeMaWo13,BuLa89}).

\begin{definition}\label{def:infocompleteness}
Let $\h A$ be a collection of observables $\Eoo:\h B(\R)\to\lh$. We say that $\h A$ is {\em informationally complete with respect to pure states} if  for any two pure states $\varrho_1$ and $\varrho_2$, $\varrho^\Eoo_1 = \varrho^\Eoo_2$ for all $\Eoo\in\h A$ implies $\varrho_1=\varrho_2$.
\end{definition}

Obviously, informational completeness implies informational completeness with respect to pure states. Moreover, it is clear that $\varrho^\Eoo_1 = \varrho^\Eoo_2$ if and only if $\sum_j c_j\tr{(\varrho_1-\varrho_2) \Eoo(X_j)} =0$ for all $(c_j)_j\subset \C$ and $(X_j)_j\subset \h B(\R)$. Therefore, it is the linear span of the operators $\Eoo(X)$ which is relevant for the purpose of determining the unknown state.

For a collection $\h A$ of observables $\Eoo:\h B(\R)\to\lh$, we denote by $\h R(\h A)$ the weak$^*$-closure of the complex linear span of the set $\{\Eoo(X) \mid \Eoo\in\h A, X\in \h B(\R)\}$. It follows that $\h R(\h A)$ is an {\em operator system}, that is,  a linear subspace of $\lh$ containing the identity $I$ of $\hh$ and satisfying $\h R(\h A)^*=\h R(\h A)$ (see \cite[p.~9]{CBMOA03}). We say that $\h R(\h A)$ is the operator system generated by $\h A$. We denote by $\h R(\h A)^\perp$ the annihilator of $\h R(\h A)$ in $\trh$, that is,
$$
\h R(\h A)^\perp = \{T\in\trh \mid \tr{TA}=0 \text{ for all }A\in\h R(\h A) \}.
$$
It is well known that a collection $\h A$ is informationally complete if and only if $\h R(\h A)^\perp =\{ 0\}$ \cite{Busch91}. 
Furthermore, a collection $\h A$ is informationally complete with respect to pure states if and only if every nonzero selfadjoint operator in $\h R(\h A)^\perp$ has rank $3$ or more \cite{CaHeScTo14}.
The following simple observation turns out to play a crucial role in our proofs.

\begin{proposition}
\label{prop:unitary_equivalence}
Let $\h A$ and $\h A^\prime$ be two collections of observables such that $\h R(\h A^\prime)=U\h R(\h A) U^\ast$ for some unitary operator $U$. 
Then $\h A^\prime$ is  informationally complete with respect to pure states if and only if  $\h A$ is such.
\end{proposition}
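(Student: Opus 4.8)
The plan is to reduce the claim entirely to the rank criterion recalled just before the statement, namely that a collection $\h B$ is informationally complete with respect to pure states if and only if every nonzero selfadjoint operator in $\h R(\h B)^\perp$ has rank $3$ or more \cite{CaHeScTo14}. Since this criterion depends only on the annihilator $\h R(\h B)^\perp$ together with the selfadjointness and rank of its elements, it suffices to show that conjugation by $U$ carries $\h R(\h A)^\perp$ onto $\h R(\h A^\prime)^\perp$ while preserving these two features. All the work therefore goes into transporting the annihilator under the unitary.

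First I would establish the identity $\h R(\h A^\prime)^\perp = U\h R(\h A)^\perp U^\ast$. Every element of $\h R(\h A^\prime)=U\h R(\h A)U^\ast$ is of the form $UAU^\ast$ with $A\in\h R(\h A)$, so for any $T\in\trh$ the cyclicity of the trace gives $\tr{T\,UAU^\ast}=\tr{U^\ast T U\,A}$. Hence $T$ annihilates $\h R(\h A^\prime)$ precisely when $U^\ast T U$ annihilates $\h R(\h A)$; that is, $T\in\h R(\h A^\prime)^\perp$ if and only if $U^\ast T U\in\h R(\h A)^\perp$, which is exactly the asserted identity.

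Next I would observe that the map $T\mapsto U T U^\ast$ is a linear bijection of $\trh$ onto itself (with inverse $S\mapsto U^\ast S U$) that sends nonzero operators to nonzero operators, preserves selfadjointness, since $(UTU^\ast)^\ast=UT^\ast U^\ast$, and preserves rank, since $U$ is invertible. Consequently it restricts to a rank-preserving bijection between the nonzero selfadjoint elements of $\h R(\h A)^\perp$ and those of $\h R(\h A^\prime)^\perp$. Thus every nonzero selfadjoint operator in $\h R(\h A)^\perp$ has rank at least $3$ if and only if the same holds in $\h R(\h A^\prime)^\perp$, and by the rank criterion $\h A$ is informationally complete with respect to pure states if and only if $\h A^\prime$ is.

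I do not anticipate a genuine obstacle here: the statement is essentially a transport-of-structure argument, and the only points requiring care are the correct placement of $U$ and $U^\ast$ in the trace computation and the verification that unitary conjugation leaves rank and selfadjointness intact. Both are routine once the annihilator identity is in hand, which is why I would treat that identity as the technical heart of the argument.
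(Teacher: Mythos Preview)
Your proof is correct. Both you and the paper transport the annihilator under unitary conjugation via the trace identity $\tr{T\,UAU^\ast}=\tr{U^\ast T U\,A}$, so the technical heart is the same. The difference is in how the conclusion is drawn: you invoke the rank criterion from \cite{CaHeScTo14} and then check that conjugation by $U$ preserves selfadjointness and rank, whereas the paper argues directly from Definition~\ref{def:infocompleteness} by noting that if two distinct pure states $\varrho_1,\varrho_2$ are undistinguished by $\h A'$, then $U^\ast\varrho_1 U$ and $U^\ast\varrho_2 U$ are distinct pure states undistinguished by $\h A$. The paper's route is slightly more self-contained, since it needs only that unitary conjugation takes pure states to pure states rather than the external rank-$3$ characterization; your route makes the symmetry of the two directions more transparent, since the rank criterion is manifestly invariant under the bijection $T\mapsto UTU^\ast$.
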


\begin{proof}
Suppose that $\h A$ is informationally complete with respect to pure states but $\h A^\prime$ is not. Then there exist two distinct pure states $\varrho_1$ and $\varrho_2$ such that $\tr{\varrho_1 \Eoo'(X)} = \tr{\varrho_2 \Eoo'(X)}$ for all $X\in\h B(\R)$ and $\Eoo'\in\aa'$, hence $\varrho_1-\varrho_2 \in \h R(\h A')^\perp$. We then have $U^*(\varrho_1-\varrho_2)U \in \h R(\h A)^\perp$, which implies $\tr{U^*\varrho_1 U\Eoo(X)} = \tr{U^*\varrho_2 U\Eoo(X)}$ for all $X\in\h B(\R)$ and $\Eoo\in\aa$. That is, the two distinct pure states $U^*\varrho_1 U$ and $U^*\varrho_2 U$ are not distinguished by $\h A$, which is a contradiction. Hence $\h A^\prime$ is informationally complete with respect to pure states. Interchanging the roles of $\h A$ and $\h A^\prime$ we have  the other implication.
\end{proof}

\section{Rotated quadrature observables}
We will now focus on the special case $\hi=L^2(\R)$. Physically this can be viewed as representing a single spinless particle confined to move in one spatial direction, or a single mode electromagnetic field. Let $Q$ and $P$ denote the standard position and momentum operators on $\hi$, so that $(Q\psi)(x) = x\psi(x)$ and $(P\psi)(x) =-i \psi'(x)$. For any $x=(q,p)^T\in\R^2$, define the corresponding Weyl operator 
\begin{equation}\label{eqn:Weyl_operator}
W(x) = e^{i\frac{qp}{2}}e^{-iqP}e^{ipQ} = e^{-iqP+ipQ}.
\end{equation}
The map $W:\R^2\to\lh$ is then an irreducible projective unitary representation of $\R^2$ which satisfies the composition rule
\begin{equation}\label{eqn:Weyl_commutation}
W(x)W(y) = e^{-\frac{i}{2} \{x,y\}} W(x+y), \qquad x,y\in\R^2
\end{equation}
where $\{(q,p)^T,(u,v)^T\} = qv-pu$ is the symplectic form on $\R^2$. In particular, the commutation relation $W(x)W(y) = e^{-i \{x,y\} } W(y)W(x)$ immediately follows. According to the Stone-von Neumann theorem \cite[(1.50)]{HAPS89}, any irreducible projective unitary representation of $\R^2$ satisfying 
 \eqref{eqn:Weyl_commutation} is unitarily equivalent to the standard one \eqref{eqn:Weyl_operator}. 

Now consider a symplectic matrix  $S\in SL(2,\R)$, i.e., one that satisfies $\{Sx,Sy\} = \{x,y\}$. Then clearly the map $x\mapsto W(Sx)$ satisfies \eqref{eqn:Weyl_commutation}, and therefore there exists  a unitary operator $U(S)$ such that $U(S)W(x)U(S)^* = W(Sx)$ for all $x\in\R^2$ (see \cite[Chapter 4.2]{HAPS89}).  
In particular, for any rotation  
$$
S_\theta = \left( \begin{array}{cc} \cos\theta & -\sin\theta \\ \sin\theta & \cos\theta  \end{array}\right)
$$
we obtain the corresponding unitary operator which, for the sake of clarity, we denote by $R(\theta)$. We can express this operator explicitly in terms of the orthonormal basis $\{h_n \mid n=0,1,\ldots\}$ of $\hi$ consisting of the  Hermite functions  
$$ 
h_n(x) = \frac{1}{\sqrt{2^n n! \sqrt{\pi}}} H_n(x) e^{-x^2/2}
$$
where
$$
H_n(x) = (-1)^n e^{x^2} \frac{\de^n}{\de x^n} e^{-x^2}
$$
is the $n^{\rm th}$ Hermite polynomial. Indeed, up to a phase factor we have 
$$
R(\theta) = \sum_{n=0}^\infty e^{i\theta n} \vert h_n\rangle\langle h_n \vert = e^{i\theta N}
$$
where $N=\sum_{n=0}^\infty n\vert h_n\rangle\langle h_n\vert$ is the number operator. We thus see that $R$ actually is a representation of the rotation group $SO(2)$ in $\hh$. Adopting the convention of  \cite{Jaming14} for the definition of the fractional Fourier transform, we have that $R(\theta)^* = R(-\theta) = F_\theta$, so that the adjoint of the rotation operator coincides with the fractional Fourier transform.

Now let $\Qo,\Po:\h B(\R)\to\lh$ be the position and momentum observables, namely, the projection valued measures associated with the operators $Q$ and $P$ by the spectral theorem. In particular, $\left[ \Qo(X)\psi\right](x)= \id_X(x) \psi(x)$, where $\id_X$ denotes the indicator function of the set $X$, and $\Po(X) = F^{-1} \Qo(X) F$ where $F=R(-\pi/2)$ is the unitary Fourier-Plancherel operator on $\hi$. For any  $\theta\in[0,2\pi)$ define the rotated quadrature observable 
$\Qo_\theta:\h B(\R)\to\lh$, 
$$
\Qo_\theta(X) = R(\theta)\Qo(X)R(\theta)^*.
$$ 
The corresponding rotated quadrature operators $Q_\theta$ are then the first moment operators of these observables, that is, $Q_\theta = \int x \Qo_\theta(\de x)$, and they may be expressed as $Q_\theta = Q\cos\theta + P\sin\theta$. 
For a system in a pure state $\varrho = \kb{\psi}{\psi}$, the measurement outcome probabilities related to the quadratures are given by
\begin{equation}\label{eq:quad-->Four}
\varrho^{\Qo_\theta}(X) = \langle \psi\vert \Qo_\theta(X) \psi\rangle = \langle R(\theta)^* \psi \vert \Qo(X)R(\theta)^*\psi\rangle = \int_X  \vert \left[ F_\theta \psi\right](x) \vert^2 \, \de x .
\end{equation}
This formula clarifies the aforementioned connection between quadrature observables and fractional Fourier transforms. 
 Indeed, it shows that the probability density associated to a measurement of the observable $\Qo_\theta$ performed on the pure state $\kb{\psi}{\psi}$ is just the intensity $|F_\theta \psi|^2$. Note that the probabilities $\{\varrho^{\Qo_\theta} \mid \theta \in [0,2\pi)\}$, or, equivalently, the intensities $\{|F_\theta \psi|^2 \mid \theta \in [0,2\pi)\}$ are also connected to the Wigner function $\ww(\vert \psi\rangle\langle \psi\vert)$ of the state \cite{CaGl69}. Namely, each density $\vert F_\theta \psi \vert^2$ coincides with the Radon transform of the Wigner function along the direction $\theta$ (see \cite{VoRi89}, and also \cite{ADT09} for a more precise review and statement of this fact).

The essential observation now is that the Fourier transform of the observable $\Qo_\theta$ is
\begin{equation}\label{eq:Weyl}
\int \e^{-iux} \Qo_\theta(\de x)=e^{-iuQ_\theta} = 
W(x) \quad \text{with} \quad x = (u\sin\theta,-u\cos\theta)^T ,
\end{equation}
where the integral is understood in the usual weak sense.
In other words, it corresponds to the restriction of the Weyl map $W$ to the one-dimensional subspace
\begin{equation}\label{eq:line}
L_\theta = \{(u\sin\theta,-u\cos\theta)^T\mid u\in\R\}
\end{equation}
of $\R^2$. With this observation, we can characterize the operator system generated by any set of rotated quadrature observables.

\begin{proposition}\label{prop:quadrature_operator_system}
Let $\h I\subset [0,2\pi)$. Then $\h R(\{  \Qo_{\theta}\mid \theta\in\h I\})$  is the weak$^*$-closure of the linear  span of 
$$
\bigcup_{\theta\in \h I}\{ W(x) \mid x\in L_\theta\}.
$$
\end{proposition}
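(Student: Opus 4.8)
The plan is to reduce the statement to a single-angle computation carried out in the predual $\trh$ and then to reassemble the angles in $\h I$ by intersecting annihilators. Throughout I work in the duality $\lh=(\trh)^*$ given by $\langle A,T\rangle=\tr{TA}$, under which the weak$^*$-closed linear span of any family $\h F\subset\lh$ coincides with ${}^\perp(\h F_\perp)$, the pre-annihilator of the pre-annihilator $\h F_\perp=\{T\in\trh\mid\tr{TA}=0\text{ for all }A\in\h F\}$ (bipolar theorem). Since both sides of the asserted identity are weak$^*$-closed subspaces of $\lh$, it suffices to prove that they have the same pre-annihilator in $\trh$.

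Fix first a single angle $\theta$, and set $\Ec_\theta:=\h R(\{\Qo_\theta\})$, the weak$^*$-closed span of $\{\Qo_\theta(X)\mid X\in\h B(\R)\}$, and $\mathcal W_\theta$, the weak$^*$-closed span of $\{W(x)\mid x\in L_\theta\}$. The central point is that these two subspaces have the same pre-annihilator. Indeed, for $T\in\trh$ the set function $X\mapsto\mu_T(X):=\tr{T\Qo_\theta(X)}$ is a finite complex Borel measure on $\R$ (finiteness being exactly where trace-class-ness of $T$ enters, $\Qo_\theta$ being a positive operator measure), and by the weak definition of the operator integral in \eqref{eq:Weyl} its Fourier transform is
\begin{equation*}
\widehat{\mu_T}(u)=\int_\R e^{-iux}\,\mu_T(\de x)=\tr{T\,e^{-iuQ_\theta}}=\tr{T\,W(x_u)},\qquad x_u=(u\sin\theta,-u\cos\theta)^T\in L_\theta .
\end{equation*}
Now $T\in(\Ec_\theta)_\perp$ means precisely $\mu_T=0$, while $T\in(\mathcal W_\theta)_\perp$ means precisely $\widehat{\mu_T}\equiv 0$, since by \eqref{eq:line} the point $x_u$ ranges over all of $L_\theta$ as $u$ ranges over $\R$. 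By the uniqueness theorem for Fourier transforms of finite complex measures on $\R$ these two conditions are equivalent, whence $(\Ec_\theta)_\perp=(\mathcal W_\theta)_\perp$ and therefore $\Ec_\theta=\mathcal W_\theta$.

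To pass to the whole collection I would observe that the pre-annihilator of $\h R(\{\Qo_\theta\mid\theta\in\h I\})$ is $\{T\in\trh\mid\tr{T\Qo_\theta(X)}=0\text{ for all }\theta\in\h I,\ X\in\h B(\R)\}=\bigcap_{\theta\in\h I}(\Ec_\theta)_\perp$, whereas the pre-annihilator of the weak$^*$-closed span of $\bigcup_{\theta\in\h I}\{W(x)\mid x\in L_\theta\}$ equals $\bigcap_{\theta\in\h I}(\mathcal W_\theta)_\perp$. Since these intersections agree term by term by the single-angle result, the two weak$^*$-closed spans have equal pre-annihilators and are therefore equal, which is the claim.

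The only genuinely analytic step, and the one I expect to require the most care, is the identity $\widehat{\mu_T}(u)=\tr{T\,W(x_u)}$ together with the invocation of Fourier uniqueness: one must justify pairing the weak operator integral $\int e^{-iux}\Qo_\theta(\de x)=e^{-iuQ_\theta}$ with the trace-class operator $T$ so as to recover the scalar Fourier transform of the \emph{finite} measure $\mu_T$, after which injectivity of the Fourier transform on $M(\R)$ closes the argument. Everything else is a formal manipulation of annihilators in the $\lh$–$\trh$ duality.
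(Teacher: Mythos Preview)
Your proof is correct and follows essentially the same route as the paper: both arguments reduce the equality of the two weak$^*$-closed spans to equality of their (pre-)annihilators via the bipolar theorem, and then verify this by noting that $u\mapsto\tr{T\,W(x_u)}$ is the Fourier transform of the finite complex measure $X\mapsto\tr{T\,\Qo_\theta(X)}$ and invoking injectivity of the Fourier transform on $M(\R)$. Your presentation is slightly more explicit in first isolating the single-angle case and then intersecting over $\theta\in\h I$, but the content is the same.
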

\begin{proof}
By the bipolar theorem \cite[V.1.8]{CFA90}, it is enough to show that $T\in \h R(\{  \Qo_{\theta}\mid \theta\in\h I\})^\perp$ if and only if  $\tr{T W(x )}=0$ for all $x\in L_\theta$, $ \theta \in\h I$. Let $T\in \h R(\{  \Qo_{\theta}\mid \theta\in\h I\})^\perp$. Then for any $\theta\in\h I$ the complex measure $X\mapsto \tr{T\Qo_{\theta}(X)}$ is identically zero and by \eqref{eq:Weyl} we have $\tr{TW(x)} =0$ for all $x\in L_\theta$. 
Conversely, by \eqref{eq:Weyl} and the injectivity of the Fourier transform, the condition  $\tr{TW(x)}=0$ for all $x\in L_\theta$, $\theta\in\h I$, implies that $\tr{T\Qo_{\theta}(X)}$ for all $X\in\h B(\R)$ and $\theta\in\h I$ so that $T\in \h R(\{  \Qo_{\theta}\mid \theta\in\h I\})^\perp$. 
\end{proof}

\section{Main results}
We are now ready to prove the main results of this Letter. We begin by noting that, according to Proposition \ref{prop:quadrature_operator_system}, for all $\theta\in[0,\pi)$ the rotated quadratures $\Qo_\theta$ and $\Qo_{\theta+\pi}$ generate the same operator systems. Hence, it is always sufficient to consider quadratures with $\theta\in[0,\pi)$.  One of the consequences of Proposition \ref{prop:quadrature_operator_system} is that an arbitrary collection $\{  \Qo_{\theta}\mid \theta\in\h I\}$ of rotated quadratures is informationally complete if and only if $\h I$ is dense in $[0,\pi)$ (see, e.g., \cite{KiSc13}). 
Therefore no finite collection is able to distinguish between all  (i.e., pure or mixed) states. 
As the next proposition shows, even when restricting to pure states, there are certain collections which can be discarded.

\begin{proposition}\label{prop:rational_angles}
Let $\theta_1,\ldots,\theta_n\in[0,\pi)$  be such that $\theta_i-\theta_j\in\rational\,\pi$ for all $i,j=1,\ldots,n$. Then the collection $\{\Qo_{\theta_1},\ldots, \Qo_{\theta_n}\}$ is not informationally complete with respect to pure states.
\end{proposition}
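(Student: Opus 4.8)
The plan is to invoke the criterion recalled just above: the collection $\{\Qo_{\theta_1},\dots,\Qo_{\theta_n}\}$ fails to be informationally complete with respect to pure states if and only if its annihilator $\h R(\{\Qo_{\theta_1},\dots,\Qo_{\theta_n}\})^\perp$ contains a nonzero selfadjoint operator of rank at most $2$ \cite{CaHeScTo14}. So it suffices to exhibit one such operator. By Proposition \ref{prop:quadrature_operator_system} and the injectivity of the Fourier transform, a selfadjoint $T\in\trh$ lies in this annihilator precisely when all its quadrature marginals $X\mapsto\tr{T\Qo_{\theta_i}(X)}$ vanish identically, i.e. when $\tr{TW(x)}=0$ for every $x\in L_{\theta_i}$ and every $i$.

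First I would look for $T$ among the rank-two off-diagonal operators in the Hermite basis, $T=c\,\kb{h_a}{h_b}+\bar c\,\kb{h_b}{h_a}$ with $a\neq b$ and $c\in\C\setminus\{0\}$. Such a $T$ is automatically selfadjoint, traceless, and of rank exactly $2$. Writing $d=b-a$ and using $R(\theta)=e^{i\theta N}$ together with $\Qo_\theta=R(\theta)\Qo R(\theta)^*$, the measure $X\mapsto\tr{T\Qo_\theta(X)}$ is the position distribution of $R(\theta)^*TR(\theta)=c\,e^{id\theta}\kb{h_a}{h_b}+\bar c\,e^{-id\theta}\kb{h_b}{h_a}$, whose kernel diagonal is
\begin{equation*}
c\,e^{id\theta}\,h_a(x)h_b(x)+\bar c\,e^{-id\theta}\,h_a(x)h_b(x)=2\,h_a(x)h_b(x)\,\Re(c\,e^{id\theta}),
\end{equation*}
where I used that the Hermite functions are real. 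Hence the whole marginal along the direction $\theta$ vanishes identically if and only if the single scalar condition $\Re(c\,e^{id\theta})=0$ holds, equivalently $\arg c+d\theta\equiv\tfrac{\pi}{2}\pmod{\pi}$.

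The remaining step is where the hypothesis $\theta_i-\theta_j\in\rational\,\pi$ is used. I need the phase condition to hold simultaneously for all $i$, i.e. $\arg c+d\theta_i\equiv\tfrac{\pi}{2}\pmod{\pi}$ for every $i$. Taking differences, this forces $d(\theta_i-\theta_j)\in\pi\Z$, which the rationality assumption makes solvable: writing $\theta_i-\theta_1=(p_i/q)\,\pi$ with $p_i\in\Z$ and a common denominator $q\in\nat$ (so $p_1=0$), the choice $d=q$ gives $d(\theta_i-\theta_1)=p_i\pi\in\pi\Z$ for all $i$, and then the single value $\arg c=\tfrac{\pi}{2}-q\,\theta_1$ satisfies every congruence at once. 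With these choices $T=c\,\kb{h_0}{h_q}+\bar c\,\kb{h_q}{h_0}$ is a nonzero selfadjoint rank-two operator in the annihilator, and the criterion delivers the conclusion.

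The part I expect to require the most care is not the arithmetic but the factorization of the marginal: one must track the phase $e^{id\theta}$ through the conjugation by $R(\theta)$ and check that the position-dependent factor is exactly $h_a(x)h_b(x)$ with no leftover $\theta$-dependence, so that the vanishing of one scalar really forces the entire density to vanish for all $x$. It is worth stressing that this rank-two ansatz is tight: for a single off-diagonal pair the congruences are jointly solvable only because the angle differences are rational multiples of $\pi$. This is precisely why the stronger impossibility statement for three arbitrary angles, proved in the next section, cannot be obtained from this simple construction and calls for the finer analysis of symplectic $2\times 2$ matrices.
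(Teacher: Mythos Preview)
Your proof is correct and is essentially the paper's argument recast through the rank-$2$ annihilator criterion: the paper exhibits two indistinguishable pure states $\psi_\pm=\tfrac{1}{\sqrt{2}}(h_0\pm i h_k)$, and your operator $T$ (with $c$ purely imaginary after normalizing $\theta_1=0$) is precisely their difference $|\psi_+\rangle\langle\psi_+|-|\psi_-\rangle\langle\psi_-|$. Your version has the minor nicety of absorbing the preliminary reduction to $\theta_1=0$ into the choice of $\arg c$, but the key computation---that the $\theta$-marginal factors as $2\,h_a(x)h_b(x)\,\Re(c\,e^{id\theta})$ because the Hermite functions are real---is identical in both.
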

\begin{proof}
Without loss of generality we may assume that $0=\theta_1 <\theta_2 <\ldots <\theta_n<\pi$. Indeed, if $\theta_1\neq 0$, then we can replace the observables $\Qo_{\theta_1},\ldots, \Qo_{\theta_n}$ with the rotated ones $R(\theta_1)^*\Qo_{\theta_1}R(\theta_1),\ldots,$ $R(\theta_1)^*\Qo_{\theta_n}R(\theta_1)$ since the unitary transformation does not affect the property of informational completeness with respect to pure states by Proposition \ref{prop:unitary_equivalence}. Since  $R(\theta_1)^*\Qo_{\theta_j}R(\theta_1) = \Qo_{\theta_j-\theta_1}$, by denoting $\theta'_j=\theta_j-\theta_1$, we see that the observables $\Qo_{\theta'_1},\ldots, \Qo_{\theta'_n}$ satisfy $\theta'_1=0$. 

For each $j=2,\ldots,n$ there exist $q_j,p_j\in\nat$ such that $\theta_j = \frac{q_j}{p_j}\pi$. By setting $k=2\cdot p_2\cdots p_n$ we have that $k\,\theta_j=0\, ({\rm mod}\, 2\pi)$ for all $j=1,\ldots, n$. In particular, $R(\theta_j)^*h_k =\e^{-ik\theta_j}h_k = h_k$, so that by defining $\psi_\pm = \frac{1}{\sqrt{2}}(h_0  \pm ih_k)$ we have $R(\theta_j)^*\psi_\pm  = \psi_\pm$. 
The pure states $\vert\psi_+\rangle \langle \psi_+ \vert$ and $\vert\psi_-\rangle \langle \psi_-\vert$ are distinct, but 
\begin{eqnarray*}
 \langle \psi_\pm \vert \Qo_{\theta_j}(X) \psi_\pm \rangle &=& \langle \psi_\pm \vert \Qo(X) \psi_\pm \rangle \\
  &=& \int_X \frac{1}{2} \left(\vert h_0(x)\vert^2 \pm i \overline{h_0(x)} h_k(x) \mp i\overline{h_k(x)} h_0(x) + \vert h_k(x) \vert^2 \right) \de x \\
& =& \int_X \frac{1}{2} \left(\vert h_0(x)\vert^2 + \vert h_k(x) \vert^2 \right) \de x
\end{eqnarray*}
since the Hermite functions are real valued. Hence, the pure states $\vert\psi_+\rangle \langle \psi_+ \vert$ and $\vert\psi_-\rangle \langle \psi_-\vert$ cannot be distinguished and therefore  $\{\Qo_{\theta_1},\ldots, \Qo_{\theta_n}\}$ is not informationally complete with respect to pure states.
\end{proof}

When $n=2$ and $\theta_1-\theta_2\in\rational\,\pi$, the fact that the collection of two observables $\{\Qo_{\theta_1},\Qo_{\theta_2}\}$ is not informationally complete with respect to pure states was already observed in \cite[Remark 5.7]{Jaming14}.
It is worth noting that the assumption of finiteness for the collection of rotated quadratures is crucial in Proposition \ref{prop:rational_angles} above. Indeed, by going to infinitely many quadratures it is easy to give examples where the corresponding statement is false. The most simple example is given by the collection $\{ \Qo_{\theta} \mid \theta \in \rational\, \pi \cap [0,\pi) \}$. Since $\rational\, \pi \cap [0,\pi) $ is dense in $[0,\pi)$, this collection of observables is even informationally complete. 

We now come to our main result.

\begin{proposition}\label{prop:three_quadratures}
Let $\theta_1,\theta_2,\theta_3\in[0,\pi)$. 
The collection of observables $\{\Qo_{\theta_1},\Qo_{\theta_2}, \Qo_{\theta_3}\}$ is not  informationally complete with respect to pure states.
\end{proposition}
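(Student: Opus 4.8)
The plan is to exhibit, for every choice of the three angles, a single nonzero selfadjoint operator of rank two lying in $\h R(\{\Qo_{\theta_1},\Qo_{\theta_2},\Qo_{\theta_3}\})^\perp$; by the criterion recalled just before Proposition \ref{prop:unitary_equivalence} (every nonzero selfadjoint element of the annihilator must have rank at least three) this is exactly what rules out informational completeness with respect to pure states. Since $\h R(\h A)$ always contains $I$, any such operator has vanishing trace, and therefore is automatically of the form $\varrho_+-\varrho_-$ for two distinct pure states; producing it is the same as producing two pure states with identical quadrature statistics. Moreover, by Proposition \ref{prop:quadrature_operator_system} the condition $T\in\h R(\{\Qo_{\theta_j}\})^\perp$ is equivalent to the vanishing of the function $x\mapsto\tr{TW(x)}$ on the three lines $L_{\theta_1},L_{\theta_2},L_{\theta_3}$. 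Thus the whole problem reduces to the following: find a rank-two selfadjoint $T$ whose characteristic function $\tr{TW(\cdot)}$ vanishes on three prescribed lines through the origin of $\R^2$.

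First I would fix the model operator $T_0=i(\vert h_0\rangle\langle h_3\vert-\vert h_3\rangle\langle h_0\vert)$. A one-line computation shows $T_0=\vert\psi_+\rangle\langle\psi_+\vert-\vert\psi_-\rangle\langle\psi_-\vert$ with $\psi_\pm=\tfrac{1}{\sqrt2}(h_0\mp i h_3)$, so $T_0$ is a nonzero rank-two selfadjoint operator. To compute its characteristic function I would avoid expanding displacement-operator matrix elements and instead use rotation covariance: from $R(\phi)^*\vert h_0\rangle\langle h_3\vert R(\phi)=e^{3i\phi}\vert h_0\rangle\langle h_3\vert$ and $R(\phi)W(x)R(\phi)^*=W(S_\phi x)$, the function $g(x)=\tr{\vert h_0\rangle\langle h_3\vert W(x)}$ obeys $g(S_\phi x)=e^{3i\phi}g(x)$, so in polar coordinates $x=(r\cos\alpha,r\sin\alpha)^T$ one has $g(x)=e^{3i\alpha}g_0(r)$ with $g_0(r)=\langle h_3\vert W((r,0)^T) h_0\rangle=\int\overline{h_3(t)}\,h_0(t-r)\,\de t\in\R$, the last integral being real because the Hermite functions are real. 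Combining this with $\tr{T_0 W(x)}=ig(x)-i\,\overline{g(-x)}$ I expect to arrive at $\tr{T_0 W(x)}=2i\,g_0(r)\cos(3\alpha)$, whose zero set contains the three equally spaced lines $\alpha\in\{\pi/6,\pi/2,5\pi/6\}$; here $g_0\not\equiv0$ since $g_0(r)\sim c\,r^3$ near the origin.

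The heart of the argument is then a reduction by a symplectic change of variables. Using $U(S)$ for $S\in SL(2,\R)$, the conjugate $T=U(S)^*T_0\,U(S)$ is again nonzero, selfadjoint and of rank two, and its characteristic function is the pullback $\tr{TW(x)}=\tr{T_0 W(Sx)}$. Hence it suffices to find $S$ mapping the three lines $L_{\theta_1},L_{\theta_2},L_{\theta_3}$ \emph{into} the three equally spaced zero-lines of $\tr{T_0W(\cdot)}$ found above: for such $S$ the function $\tr{TW(\cdot)}$ vanishes identically on each $L_{\theta_j}$ (the factor $\cos(3\alpha)$ is zero there for every radius), so $T\in\h R(\{\Qo_{\theta_1},\Qo_{\theta_2},\Qo_{\theta_3}\})^\perp$, and we conclude either by the rank criterion or, equivalently, by observing that the distinct pure states $U(S)^*\vert\psi_\pm\rangle\langle\psi_\pm\vert U(S)$ have identical statistics.

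I expect the existence of this $S$ to be the main obstacle, and it is precisely the step that, as the introduction anticipates, ``boils down to the analysis of symplectic $2\times2$-matrices''. The action of $SL(2,\R)$ on lines through the origin is the action of $PSL(2,\R)$ on $\R\mathbb{P}^1$ by Möbius transformations, and what is needed is its transitivity on unordered triples of distinct points; this follows from the sharp $3$-transitivity of $PGL(2,\R)$ on $\R\mathbb{P}^1$, restricted to the index-two subgroup $PSL(2,\R)$, where the orientation constraint is absorbed by a suitable relabelling of the (unordered) triple. The degenerate cases, in which two or all three of the $\theta_j$ coincide, are only easier: one maps the resulting two, respectively one, distinct lines into a subset of the three zero-lines using double transitivity, respectively transitivity, of the same action, and since $\tr{T_0W(\cdot)}$ vanishes on \emph{all} three zero-lines the conclusion is unchanged. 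Everything apart from this transitivity statement is a routine verification, so I would organise the write-up by establishing the symplectic normalisation first and analysing $T_0$ afterwards.
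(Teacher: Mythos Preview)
Your proposal is correct and follows essentially the same strategy as the paper: both reduce an arbitrary triple of lines to a fixed reference triple via the $SL(2,\R)$ action on $\R\mathbb{P}^1$, and then exhibit a Hermite-based rank-two selfadjoint witness in the annihilator of that reference triple. The only cosmetic differences are that the paper targets $\{L_0,L_{\pi/4},L_{\pi/2}\}$ (invoking Proposition~\ref{prop:rational_angles} with $k=8$) and writes the required $S\in SL(2,\R)$ down explicitly, whereas you target the equally spaced lines using $h_3$ directly via the characteristic function and appeal to the abstract $3$-transitivity of $PSL(2,\R)$ on unordered triples.
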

\begin{proof}
We may assume that $0\leq\theta_1 <\theta_2 <\theta_3 <\pi$. We will show that there exists a unitary operator $U$ such that $U\h R(\Qo_{\theta_1},\Qo_{\theta_2}, \Qo_{\theta_3})U^* =\h R(\Qo_{0},\Qo_{\pi/4}, \Qo_{\pi/2}) $. 
It follows from Proposition \ref{prop:rational_angles} that the collection $\{\Qo_{0},\Qo_{\pi/4}, \Qo_{\pi/2}\}$ is not informationally complete with respect to pure states, and Proposition \ref{prop:unitary_equivalence} implies the same for $\{\Qo_{\theta_1},\Qo_{\theta_2}, \Qo_{\theta_3}\}$.

The proof of the unitary equivalence goes as follows. By Proposition \ref{prop:quadrature_operator_system} the operator system $\h R(\Qo_{\theta_1},\Qo_{\theta_2},\Qo_{\theta_3})$ is the weak$^*$-closure of the linear span of the Weyl operators $ W(x)$  with $x\in \bigcup_{j=1}^3 L_{\theta_j}$, where $L_{\theta_j}$ are the lines defined in \eqref{eq:line}. 
Given any symplectic matrix $S\in SL(2,\R)$, the Stone-von Neumann theorem \cite[(1.50)]{HAPS89} shows the existence of a unitary operator $U(S)$ such that $U(S)W(x)U(S)^* = W(Sx)$ for all $x\in\R^2$. 
Therefore,
$$
U(S)\{ W(x) \mid x\in L_{\theta_1}\cup L_{\theta_2}\cup L_{\theta_3}\} U(S)^* = \{ W(x) \mid x\in L_{\theta'_1}\cup L_{\theta'_2}\cup L_{\theta'_3}\} \quad \text{with } \quad L_{\theta'_j}=SL_{\theta_j}.
$$
By Proposition \ref{prop:quadrature_operator_system}, we then have $U(S)\h R(\Qo_{\theta_1},\Qo_{\theta_2}, \Qo_{\theta_3})U(S)^* = \h R(\Qo_{\theta'_1},\Qo_{\theta'_2}, \Qo_{\theta'_3})$. Thus, we only need to find a symplectic matrix $S$ such that $SL_{\theta_1} = L_0$, $SL_{\theta_2}= L_{\pi/4}$ and $SL_{\theta_3} = L_{\pi/2}$. 
It is easy to check that the matrix
$$
S = \frac{1}{\sqrt{\sin(\theta_3-\theta_1)}}
\left(
\begin{array}{cc}
\sqrt{\frac{\sin(\theta_3-\theta_2)}{\sin(\theta_2-\theta_1)}}
\cos\theta_1 & \sqrt{\frac{\sin(\theta_3-\theta_2)}{\sin(\theta_2-\theta_1)}}
\sin\theta_1 \\
\sqrt{\frac{\sin(\theta_2-\theta_1)}{\sin(\theta_3-\theta_2)}} \cos\theta_3 & \sqrt{\frac{\sin(\theta_2-\theta_1)}{\sin(\theta_3-\theta_2)}} \sin\theta_3
\end{array}
\right)
$$
has the required property. Indeed, we have
\begin{align*}
S \left(\begin{array}{c} \sin\theta_1 \\ -\cos\theta_1 \end{array}\right) & = \sqrt{\frac{\sin(\theta_3-\theta_1)\sin(\theta_2-\theta_1)}{\sin(\theta_3-\theta_2)}} \left(\begin{array}{c} \sin 0 \\ -\cos 0 \end{array}\right) \\
S \left(\begin{array}{c} \sin\theta_2 \\ -\cos\theta_2 \end{array}\right) & = \sqrt{\frac{2\sin(\theta_3-\theta_2)\sin(\theta_2-\theta_1)}{\sin(\theta_3-\theta_1)}} \left(\begin{array}{c} \sin\frac{\pi}{4} \\ -\cos\frac{\pi}{4} \end{array}\right) \\
S \left(\begin{array}{c} \sin\theta_3 \\ -\cos\theta_3 \end{array}\right) & = \sqrt{\frac{\sin(\theta_3-\theta_2)\sin(\theta_3-\theta_1)}{\sin(\theta_2-\theta_1)}} \left(\begin{array}{c} \sin\frac{\pi}{2} \\ -\cos\frac{\pi}{2} \end{array}\right) .
\end{align*}
\end{proof}

Note that if the result of Proposition \ref{prop:three_quadratures} is regarded from the point of view of the Wigner transform $\ww[\psi] = \ww(\vert \psi\rangle\langle \psi\vert)$ of the signal $\psi\in L^2(\R)$, then it implies that knowing the Radon transform of $\ww[\psi]$ along only three directions $\theta_1, \theta_2$ and $\theta_3$ is not enough to reconstruct $\psi$ when $\psi$ is arbitrary.

\section{Discussion}

Since our proof of the main result regarding the insufficiency of three rotated quadratures, or equivalently fractional Fourier transforms, came down to finding a single suitable symplectic matrix, one might hope to use the same approach also for four quadratures. Indeed, four quadratures $\{\Qo_{\theta_1},\Qo_{\theta_2},\Qo_{\theta_3},\Qo_{\theta_4}\}$ are associated to the corresponding lines $L_{\theta_j}$, $j=1,\ldots, 4$, by Proposition \ref{prop:quadrature_operator_system}, and we would then try to find a symplectic matrix $S$ which maps the lines in such a way that $SL_{\theta_j} = L_{\theta_j'}$ where now $\theta_i'-\theta_j'\in\rational\, \pi$ for all $i,j=1,\ldots,4$. However, a moment's thought reveals the fact that this is not always possible. 

Indeed, suppose that we have $\theta_1=0$, $\theta_2=\pi/4$, $ \theta_3 = \pi/2$ and $\theta_4\in (\pi/2,\pi)$ is such that $\cot\theta_4$ is a transcendental number. By applying an extra rotation if necessary, we may assume that $\theta_1'=0$. Now the conditions $SL_0  = L_0$ and $\det S=1$ imply that 
$$
S =
\left(
\begin{array}{cc}
a & 0\\
b & a^{-1}
\end{array}
\right)
$$
so that by looking at how the slope $k_\theta =-\cot{\theta}$ of the line $L_\theta$ changes, we obtain the equations
$$
\cot\theta_j' = \frac{a^{-1} \cos\theta_j -b\sin\theta_j}{a\sin\theta_j}
$$
for $j=2,3,4$. In particular, it follows from the cases $j=2$ and $j=3$ that
$$
a^2 = \frac{1}{\cot\theta'_2 - \cot\theta'_3}
\quad \text{and} \quad
b = -a\cot\theta'_3 
$$
so that both $a$ and $b$ must be algebraic numbers \cite{Swift22}. But the remaining case $j=4$ now gives 
$$
\cot\theta_4= a^2\cot\theta_4'+ab
$$
which is a contradiction since the right-hand side is an algebraic number, but the left-hand side is transcendental by assumption.

Coming back to the quadratures $\{\Qo_{\theta_1},\Qo_{\theta_2},\Qo_{\theta_3},\Qo_{\theta_4}\}$, we may now conclude that our method fails to prove or disprove their informational completeness with respect to pure states when the four angles $\theta_1,\theta_2,\theta_3,\theta_4$ are chosen as above. Indeed, picking an arbitrary $S\in SL(2,\R)$, we still obtain the equality $U(S)\h R(\Qo_0,\Qo_{\pi/4},\Qo_{\pi/2},\Qo_{\theta_4}) U(S)^*= \h R(\Qo_{\theta'_1},\Qo_{\theta'_2},\Qo_{\theta'_3},\Qo_{\theta'_4})$. However, by the previous discussion it is not possible to make all the differences $\theta'_i-\theta'_j$ rational multiples of $\pi$, thus Proposition \ref{prop:rational_angles} can no longer be invoked to conclude that the collection $\{\Qo_{\theta'_1},\Qo_{\theta'_2},\Qo_{\theta'_3},\Qo_{\theta'_4}\}$ is not informationally complete with respect to pure states. Therefore, the sufficiency of the four quadratures $\{\Qo_0,\Qo_{\pi/4},\Qo_{\pi/2},\Qo_{\theta_4}\}$ to determine all pure states still remains an open question when $\cot\theta_4$ is a transcendental number.

\section*{Acknowledgments}
TH acknowledges financial support from the Academy of Finland (grant no 138135). JS and AT acknowledge financial support from the Italian Ministry of Education, University and Research (FIRB project RBFR10COAQ).

\end{document}